\newcounter{AlgorithmCounter}
\newcommand{\FUNCTION}{\item[\textbf{Algorithm
 \arabic{AlgorithmCounter}\addtocounter{AlgorithmCounter}{1}:}]}
\newcommand\InPUT{\hspace{1pt}I\kern-.05em
  {\kern-.05em%
    \lower-.3ex\hbox{n}}%
  \kern-.1em
  P\kern-.17em\hbox{U}\kern-.085em \lower-.009ex\hbox{T\hspace{2pt}}}
\newtheorem{prop}{Proposition}
\begin{document}
\title{End-to-End Reliability-Aware Scheduling for Wireless Sensor Networks}
%
\author{
Felix~Dobslaw,~
        Tingting~Zhang,
        and~Mikael~Gidlund

\thanks{F. Dobslaw, Tingting Zhang and Mikael Gidlund are with the Computer
Science Department, Mid-Sweden University, \"Ostersund and Sundsvall, Sweden
(e-mail: firstname.lastname@miun.se).}
\thanks{Manuscript received August 31, 2014; revised November 7, 2014;
accepted December  5, 2014.} }
%
%

\markboth{IEEE Transactions on Industrial Informatics,~Vol.~X, No.~Y,
MONTH~YEAR}%
{Dobslaw \MakeLowercase{\textit{et al.}}: Crosslayer Scheduling}
%



\maketitle

\begin{abstract}
Wireless Sensor Networks (WSN) are gaining popularity as a flexible and
economical alternative to field-bus installations for monitoring and control
applications.
For mission-critical applications, communication networks must provide
end-to-end reliability guarantees, posing substantial challenges for WSN.
Reliability can be improved by redundancy, and is often addressed on the MAC layer by re-submission of lost packets, usually applying slotted scheduling.
Recently, researchers have proposed a strategy
to optimally improve the reliability of a given schedule by repeating the most
rewarding slots in a schedule incrementally until a deadline. This \textit{Incrementer}
can be used with most scheduling algorithms but has scalability issues which
narrows its usability to offline calculations of schedules, for networks that are rather static. In
this paper, we introduce SchedEx, a generic heuristic scheduling
algorithm extension which guarantees a user-defined end-to-end reliability.
SchedEx produces competitive schedules to the existing
approach, and it does that consistently more than an \emph{order of magnitude}
faster. The harsher the end-to-end reliability demand of the network, the 
better SchedEx performs compared to the Incrementer.
We further show that SchedEx has a more
evenly distributed improvement impact on the scheduling algorithms, whereas the
Incrementer favors schedules created by certain scheduling algorithms.
\end{abstract}

\begin{IEEEkeywords}
Mission-Critical, Industrial Wireless Sensor Networks, Reliable Packet Delivery,
TDMA
\end{IEEEkeywords}
%
\IEEEpeerreviewmaketitle
 
\section{Introduction}
\label{intro}
Wireless has been promoted as a flexible alternative to wired
field-bus, and is therefore pushing forward into the industrial market \cite{hancke2013special}, thereby widening the scope from more basic information
harvesting (sensing) applications for the offline analysis of data, to receiving
more and more responsibility for control (acting).
The challenge with wireless networks is that they use a communication medium which is
intrinsically open and vulnerable to interference, so that the medium access
control (MAC) plays a vital role for the guarantee of
end-to-end qualities \cite{Survey:MissionCritical}. The two industrial wireless
standards that stand for the largest market share, WirelessHART in the first
place, and ISA 100.11a in the second place, follow a centralized approach with a
central network manager, responsible for the scheduling configuration.
WirelessHART adopts the IEEE 802.15.4 PHY layer and the MAC needs to conform
to the IEEE 802.15.4 standard. To increase the reliability, WirelessHART uses time-division multiple access (TDMA)
and automatic repeat-reQuest (ARQ), but leaves much of the scheduling algorithm specifics and implementation to the vendor.
Contention-free protocols are preferred over
contention-based protocols, such as carrier sense multiple access (CSMA),
because packet delivery is more predictable and energy can easier be saved ({\em e.g.} by duty-cycling) \cite{bianchi2000performance, lennvall2008comparison}.

Much of the MAC-layer scheduling research that has been conducted assumes in the
network model that packet delivery is certain \cite{Ergen:2010},
\cite{Incel:2011}. The main challenge under that assumption is to identify as
short a conflict-free schedule frame as possible that allows all sensors to
transmit their packets to a set of sinks or gateways with direct and reliable connections to
the network manager.
With harsh network-wide delay and reliability constraints, the difficulty of the
scheduling problems to be solved increases drastically, and the review of the
state-of-the-art declares the combined end-to-end guarantee of quality of
service (QoS) features such as reliability and delay as open research
\cite{Survey:MissionCritical}.

In communication systems, end-to-end reliability is commonly defined as the
end-to-end probability that all transmitted packets arrive at their destination
before a common deadline has been reached, or before several individual
deadlines have been reached.
For many industrial applications reliability guarantees are a fixed constraint.
A latency bound is of no value if it does not hold at a certain reliability
level and lost packets or missed deadlines imply severe consequences for
many industrial control applications.
Scheduling algorithms in the literature lack the ability of
reliability guarantees in the schedule creation process \cite{Ergen:2010, yan2014hypergraph}.
An improvement of reliability can be achieved by means of redundancy.
Redundancy can be implemented on different layers, and come in the form of, {\em e.g.}, redundant packet content and error correction codes
(physical layer), repeated transmissions (MAC layer), the installation of
relays to improve connectivity, or the concurrent transmission of packets over
multiple paths. Every reliability improving means
has it's caveat, including increased costs, maintenance efforts, or
packet delay.

In this paper, we improve reliability
on the MAC layer by means of dedicated spare opportunities for packet
transmission.
The authors in \cite{yan2014hypergraph} propose a two-step process where
initially a schedule is created by one of many scheduling algorithms and then incremented
in order to maximize reliability until the maximum acceptable delay has been
reached. However, the proposed incremental improvement function is not scalable,
and it requires a two-step process; scheduling, and extending.
The problem we address is the lack of a scalable, fast to compute and
reliability ensuring scheduling algorithm. Running solver software for hours to
identify optimal results is not an option for industrial WSN with resource constrained field devices.

We propose SchedEx, a low-complexity generic extension for existing slot-based
scheduling algorithms. SchedEx ensures that the resulting schedule frame
guarantees a lower end-to-end reliability bound $\underline\rho$. SchedEx, as
opposed to the \emph{Incrementer} presented in \cite{yan2014hypergraph}, does not require a valid schedule frame to start with. With SchedEx, schedules can be swiftly calculated at the network
manager, or the sensors. The information can be used for a substantially faster
re-scheduling than what has been possible thus far. Further, the algorithm could be integrated into cross layer optimization frameworks for re-routing and
scheduling. We prove that SchedEx guarantees a user-defined reliability level
$\underline\rho \in [0,1]$, not increasing the scheduling algorithm complexity
by more than a constant factor.
The repetition vector it requires for the calculation of necessary repeats is created in
$O(|\mathcal{T}|)$, where $\mathcal{T}$ is the amount of transmitters in
the network. We compare four recently introduced scheduling algorithms using
SchedEx and obtain results that are competitive to the method in
\cite{yan2014hypergraph} for harsh reliability requirements, doing that an order of magnitude faster.
We apply single-path routing with a fixed set of transmitters and a single sink,
in order to make our results comparable to existing work in
\cite{yan2014hypergraph}. However, without loss of generality, SchedEx can be
applied to topologies with arbitrary flows and/or multi-path routing.

The remainder of the paper is organized as follows. Section \ref{background}
gives an overview of related work, Section \ref{theory} provides the
network model and general assumptions. Section \ref{method} explains the
approach, specifying the investigated topologies, and introducing SchedEx. The
simulations with their results are presented in Section \ref{experiments}. The
paper concludes with a discussion in \ref{discussion} and concluding remarks in
\ref{conclusion}.
\section{Related Work}
\label{background}
Various scheduling problem variants have been investigated in the literature
({\em e.g.} \cite{Ergen:2010,Incel:2011,Djukic:2009, shen2013sas}). The decision making with
respect to medium access and routing have a dominating influence on the QoS.
The medium access in proposals for mission-critical applications have been
dominated by the TDMA approach \cite{Ergen:2010}, also adapted in WirelessHART \cite{WirelessHART}, ISA
100.11a \cite{isa2009100}, WIA-PA \cite{wiapa}, and IEEE 802.15.4e
\cite{IEEEstde}.
Using TDMA, sensors require time synchronization, and permissions for
transmission are assigned based on dedicated slots in a recurrent TDMA
schedule-frame.
That way, within-network interference is reduced to a minimum while introducing an administrative overhead.
Finding minimal TDMA schedules for a WSN graph is
an NP-hard problem \cite{Ergen:2010}.
For dynamic environments with conflicts changing over time, the tracking of an
optimum becomes computationally infeasible even ignoring interference,
packet-loss, and time synchronization in the problem formulation.
Fortunately, WSNs for automation do usually not require optimal solutions,
but solutions that satisfy QoS constraints with highest possible reliability.

Substantial research for reliability improvement has been conducted in the field
of industrial wireless sensor networks (IWSN) \cite{tii1,tii2,tii3,tii4,
tii5}, which underlines the need for reliability ensuring scheduling.
The authors in \cite{Survey:MissionCritical} give a recent review of existing research
on QoS guarantees for mission-critical WSN applications.
The diversity of applications and their constraints with WSN as a potential
solution imply that the tailored proposals in the literature are diverse and problem dependent too, for instance with regards to topology, traffic patterns, or the assignment of responsibilities in the network.
Thus, general approaches for guarantees on reliability, latency, and
other quality measures (such as jitter, or throughput),
considering network dynamics, are highly demanded.

The approaches introduced in \cite{GinMAC} and \cite{Burst} propose offline
dimensioning with a-priori assessment of packet delivery ratios on the MAC-layer
using TDMA scheduling. Burst-error metrics are used in
order to improve end-to-end reliability where latency-bounds are reported
by repeating slots until a deadline. The greedy algorithm from \cite{Burst}
employs spatial-reuse to provide latency guarantees, a technique that allows
multiple sensors to transmit simultaneously if they are not interfering with one
another which can considerably reduce schedule frame sizes, and thereby reduce
latency.
The underlying packet reception rates for the WSNs in \cite{GinMAC, Burst} go
unreported. They could be high, showing only small traces of interference, which
would not be a generalizable assumption for industrial networks \cite{FilipsPaper}.
Distributed best-effort protocols like the one in \cite{MMSPEED} can surely improve on
performance in best-effort multi-hop networks,
but for the use in industrial networks with harsh end-to-end constraints,
best-effort is not enough and centralized solutions are needed.

Recently, researchers have started to consider the trade-off
between latency and reliability given it a theoretical basis.
A first study on scheduling for WSN with end-to-end
transmission delay guarantees and end-to-end reliability maximization can be
found in \cite{yan2014hypergraph}. The authors propose two scheduling
algorithms, Dedicated and Shared Scheduling, whereas the second algorithm is a
variant of the first. One of the main differences to existing algorithms,
such as the ones that are used for comparison in \cite{yan2014hypergraph} from
\cite{Ergen:2010}, is that they consider the node-to-node packet reception
rates by ordering the sensors according to link quality
before scheduling. The authors give a sound mathematical foundation for the
problem, both covering single-path and multi-path routing, and further
introduce an optimal schedule increment strategy, which under the assumptions
of consistent node-to-node packet delivery rates optimally extends the
schedule by repeating the most suitable slot up until a certain user-defined deadline
has been reached. The schedules are therefore locally optimal only allowing
for schedule extension by the repetition of existing slots (no combination or
change of slots allowed). In \cite{yan2014hypergraph}, Dedicated and Shared
Scheduling showed to outperform the competitors in all experiments. Single-path
routing showed to outperform any-path routing with respect to end-to-end reliability,
which shows that the assumption of using redundancy in terms of multiple routes
does not necessarily pay off, and re-scheduling can be a more lightweight, and well performing, alternative.

\section{Network Model}
\label{theory}
A WSN is defined as a digraph $G=\{V, E\}$. $V=\mathcal{T} \cup \mathcal{S}$ is the set of sensors in the network, with transceivers
$\mathcal{T}$, sinks $\mathcal{S}$, and $\mathcal{T} \cap \mathcal{S} =
\emptyset$. We model the successful
arrival of a packet as node-to-node packet arrival rates that include the
successful reception of an acknowledgement (ACK) on the receiver side before the
end of the time-slot has been reached.
Information of packet reception rates can be obtained based on empirical data by methods such as
\cite{fonseca2007four}.

$E=\{ (t,p) | q_{tp} > 0 \}$ is the set of directed links from any sensor
$t \in \mathcal{T}$ to any receiver $p \in V$, with node-to-node packet
reception rates, expressed as $q_{tp}$.
$Q$ is the link quality matrix of size $|\mathcal{T}| \times |V|$ that assigns each transceiver in row $t$ the empirical likelihood of receiving a packet in $p$ as $q_{tp}$.
We assume directed links, because link strength can vary depending on the transmission direction and
antenna positioning.

The state of the network is represented by a buffer $b = (b_1, ..,
b_{|\mathcal{T}|})$ where $b_t$ with $1 \leq t \leq |\mathcal{T}|$ is the
amount of packets waiting for transmission at sensor $t$. $b = \overrightarrow{x}$ signifies that
each sensor has equally $x$ packets in the local buffer.
 
\subsection{Model Assumptions}
All sensors operate in half-duplex mode, since full-duplex is not readily
available and still work in progress. Time-synchronization is a crucial
requirement for the enabling of predictability of QoS on a network level. We assume time synchronization
between the sensors to be guaranteed by the lower layers of the protocol stack. Imperfect time-synchronization requires receivers to listen some time before the
slot starts, and stop listening some time thereafter (\emph{guard time}).
References for the efficient realization of time-synchronization can be found in
\cite{stanislowski2014adaptive}.
Further, in a real TDMA-based setup, each sensor receives a
transmission and receiving list, outlining in which recurring time-slot of the
schedule-frame it has the right to transmit or is supposed to listen for
incoming packets to be relayed.
This \emph{duty-cycling} is a common approach for the reduction of battery
drain.
Fixed node-to-node reliabilities and topologies with a fixed amount of sensors
and sinks are assumed in this paper. Sampled data is transmitted via multi-hop
from the sensors to the sinks, and all packets have the same reliability
requirements. Packets are further always produced at the beginning of a schedule
frame $F$ and the common deadline to be held is the final slot in the schedule,
after which all sensors have to have succeeded to transmit all their packets
for the frame. Packets that could not be transmitted in $F$ are removed
from the buffer in order to preclude contention-problems.
Assuming stochastic independence between the different
events of transmission success and failure, two independent transmissions
from $t$ to $p$ and $t'$ to $p'$ succeed with a probability
{\small
\begin{eqnarray}
P_{success}(tp, t'p')=q_{tp} q_{t'p'}
\end{eqnarray}}
\subsection{Constrained Optimization Problem}
We formulate the problem as a constrained optimization problem.
Let $o : \mathcal{I} \rightarrow \mathbb{R}$
be an objective function that assigns each configuration of the network
$x \in \mathcal{I}$ a quality index.
We define an optimization problem as
{\small
\begin{eqnarray*}
\min_{x} & o(x) & x\in\mathcal{I}\\
\text{subject to } & c_i(x) & \leq 0
\end{eqnarray*}}
where all constraints $c_i$ must hold true for $x$ in order to be
considered a feasible solution $x\in \mathcal{I}$\footnote{Any equality
constraint can be reformulated into inequality constraints, and any
maximization problem can be expressed as the inverse of $o$.}. The QoS
requirements can either be expressed as constraints or be included in the
objective function. Here, the end-to-end delay given a routing table $R$ is to
be minimized:
{\small
\begin{eqnarray}
o(x) = |F|, \forall x =(R, F) \in \mathcal{I}  \label{mat:o}
\end{eqnarray}}
with $| \cdot |$ signifying the size of the frame, and $SA(R)=F$, for a given
scheduling algorithm $SA$. This formulation is similar to \cite{Ergen:2010},
with the difference that any solution $x=(R,F)$ is not only a schedule $F$ based on a heuristically rendered
routing table $R$, but identified and assessed as a whole. We further extend the
formulation by the additional end-to-end reliability constraint:
{\small
\begin{eqnarray}
c_0 : \underline{\rho} \leq \rho \label{mat:reliability},
\end{eqnarray}}
where $\underline{\rho}$ is the required and $\rho$ the actual
end-to-end reliability of the network. At times, reliability can be
improved at a low cost, for instance by the installation of relays. Other times,
latency trade-offs are easier to make, for instance where the pace of a production-line can adapt to the arrival of
  packets. Thus, the question if a problem should be viewed as a
  slot minimization or reliability maximization problem
  depends on the context of the application. In this paper, we made a case for
  scenarios in which topology changes are not an option, but where shorter
  scheduling cycles can lead to an increased productivity.
This approach is different to the one taken in \cite{yan2014hypergraph}, where reliability is part of the objective function, and the deadlines are expressed as a constraint.
In a real setting, the problem is commonly a constraint satisfaction
problem, where all quality requirements have to be guaranteed.
However, formulating the problem as an optimization problem allows for a better
comparison among different approaches and algorithms, which is one of the
objectives in this paper.

The formal constraints that ensure the validity of the solutions are introduced
below. On the network layer we assume a binary routing table $R$ of size
$|\mathcal{T}| \times |V|$ that with $r_{tp}=1$ assigns each transceiver in row
$t$ one recipient (\emph{parent}) $p$. $R$ fulfills the following constraints:
{\small
\begin{align}
c_1: &&r_{tt} = 0\label{const1:R}\\
c_2: &&\sum\limits_{p\in V} r_{tp} \leq 1\label{const2:R}\\
c_3: &&r_{tp} = 1 &\Rightarrow (t,p) \in E \label{const3:R}
\end{align}}
Thus, no sensor is its own parent
(\ref{const1:R}), any transmitter has at most one parent
(\ref{const2:R})\footnote{The model thus considers isolated transceivers.}, and
a directed route requires a link (\ref{const3:R}).
R is expected to be \emph{successful}, meaning that all transceivers in
$\mathcal{T}$ can forward their packets to a sink in $\mathcal{S}$ via the established routes using multiple hops (\emph{multi-hop routing}).

The MAC layer manages access to the shared communication medium. We
apply contention-free scheduling due to it's suitability for QoS satisfaction
(see. {\em e.g.} \cite{Survey:MissionCritical}).
A schedule or \emph{schedule-frame} $F$ is a binary matrix of size $m \times
|\mathcal{T}|$ that with $f_{st} = 1$ assigns transmitter $t\in\mathcal{T}$ transmission allowance
at discretized time $k$ in \emph{time-slot} $s = k\mod{m}$. $F$
is continuously repeated in order to ensure that the scheduled transceivers can
transmit periodically. We assume a single communication channel to be used as
commonly assumed in the scheduling literature.
A schedule $F$ is called \emph{valid} or \emph{collision-free} if:
{\small\begin{align}
c_4: && f_{st} = 1 \Rightarrow & f_{sp} q_{pt} = 0&&\label{const2:S}\\
c_5: && f_{st}f_{st'} = 1 \Rightarrow & q_{tp}q_{t'p} = 0 \text{  } \vee
\label{const4:S}\\
&&& \sum_{t \in \mathcal{T}} f_{st}r_{tp} = 0  \text{  }
\end{align}}
A sensor cannot receive at the same time as it transmits (\ref{const2:S}) and a
receiving sensor may not be disturbed by a second concurrent transmission
(\ref{const4:S})\footnote{Either no sensor exists that hears two concurrent
transmissions, or that sensor is not interested in either transmission.}. Thus,
spatial-reuse within a slot is allowed if no conflicts arise and transceivers can
be scheduled in multiple slots. The interfering range is modeled by directed packet
reception rates of $0.00001$ so that the constraints assure the avoidance
of transmissions for interfering sensors.

A pair of schedule frame $F$ and routing table $R$, $x_b=(F,R)_b$, is here
called \emph{successful} with respect to an initial buffer vector $b$ if it fulfills all the former constraints and enables
the transmission of all packets in the buffers to
any sink via multi-hop after the consecutive execution of all $m$ slots. In
other words, there is \emph{a chance} that all packets are delivered at the end of the schedule. In existing
work, the scheduling problem usually assumes an initial buffer vector
$b = \overrightarrow{1}$, a goal buffer vector
$b = \overrightarrow{0}$, and a perfect channel not effected by
packet loss \cite{Ergen:2010}.
This, in fact, reduces the problem to the minimization of schedule size, which
is NP-hard in its own respect \cite{Ergen:2010}.

\section{Guaranteeing Reliability}
\label{method}
Instead of calculating the exact reliability, as done for instance in
\cite{yan2014hypergraph}, we prove here that SchedEx guarantees a reliability lower bound $\underline\rho$. Given the quality variations of links over time,
a calculation of exact reliabilities is only of theoretical interest, while
ensuring certain lower bounds is highly relevant. The Incrementer and SchedEx
are two entirely different approaches to ensure end-to-end reliability
$\underline\rho$.
The Incrementer optimally improves a schedule frame $F$ by repeating the slot
within the frame which incurs the largest reliability gain until
$\underline\rho$ is ensured. It therefore requires a valid schedule frame $F$.
SchedEx is a low-complexity generic extension for existing
scheduling algorithms which ensures that the resulting schedule frame guarantees
a lower end-to-end reliability bound $\underline\rho$. SchedEx does not require a
schedule frame $F$ to start with. We show that
the produced schedules are competitive to the ones produced by the Incrementer introduced in \cite{yan2014hypergraph}, and are calculated more than an \emph{order of magnitude} faster. In the following, we are gradually increasing a network model from one transmitted packet over a single link to $k_1,\ldots,k_l$ transmitted packets over $l$ independent links, eventually forming a whole network, which guarantees the end-to-end reliability
$\underline\rho$.
\subsection{1 Link, 1 Packet}
Assuming a Bernoulli Distribution
for the arrival success of a packet submitted from a sender $s$ to a receiver
$r$, the node-to-node probability that a packet arrives in a series of $n$
attempts can be calculated by:
{\small
\begin{align}
P(success \geq 1 | n \text{ attempts}) &=\sum_{k=1}^n P(success = k | n \text{
attempts})\\
&=1-P(success = 0 | n \text{ attempts})\\
&=1 - (1-q)^n \label{successProb}
\end{align}}
The success probability is empirically explained by the
packet reception rate $q$ which translates into the success
probability for the individual, stochastically independent, attempts.
For mission-critical applications, a minimum reliability $\underline\rho =
P(success \geq 1 | n \text{ attempts})$ must be ensured. Using mathematical
terms, $\underline\rho$ is called a lower bound. For a given $\underline\rho$,
we require $n$ attempts to ensure (\ref{mat:reliability}).
\begin{prop}
\label{prop:1link1packet}
A lower bound of required attempts $n$ that ensures at
least one transmission to succeed over a link with packet reception rate $q$ and
at reliability level $\underline\rho$ can be obtained by
{\small
\begin{eqnarray}
n = \left\lceil \frac{\log(1-\underline\rho)}{\log(1-q)} \right\rceil \label{repetitions}
\end{eqnarray}}
\end{prop}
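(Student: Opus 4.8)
The plan is to start from the exact success probability derived in equation~(\ref{successProb}), namely $P(\text{success}\geq 1\mid n\text{ attempts}) = 1-(1-q)^n$, and to impose the reliability constraint~(\ref{mat:reliability}) directly. Concretely, I would require $\underline\rho \leq 1-(1-q)^n$ and then solve for the smallest integer $n$ satisfying this inequality. The target formula~(\ref{repetitions}) is exactly the ceiling of the real-valued threshold obtained by treating the inequality as an equality, so the proof amounts to an algebraic inversion followed by a rounding argument.

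First I would rearrange the inequality $\underline\rho \leq 1-(1-q)^n$ into $(1-q)^n \leq 1-\underline\rho$. Taking logarithms of both sides is the natural next step, but here lies the one subtlety worth flagging: since $0 < q < 1$ (a genuine link has positive reception rate strictly below one), we have $0 < 1-q < 1$, so $\log(1-q) < 0$. Dividing an inequality by a negative quantity reverses its direction, so the clean-looking passage from $n\log(1-q) \leq \log(1-\underline\rho)$ to $n \geq \frac{\log(1-\underline\rho)}{\log(1-q)}$ must account for this sign flip. I expect this sign-reversal to be the main (and essentially only) obstacle — it is easy to state the formula correctly yet get the inequality direction wrong in the write-up.

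Having established $n \geq \frac{\log(1-\underline\rho)}{\log(1-q)}$, I would then invoke the integrality of $n$: since the number of transmission attempts must be a whole number and we seek the \emph{smallest} such $n$ guaranteeing the bound, the minimal feasible value is the ceiling of the right-hand side, giving precisely $n = \left\lceil \frac{\log(1-\underline\rho)}{\log(1-q)} \right\rceil$ as claimed in~(\ref{repetitions}). I would briefly note that the ceiling is the correct rounding direction (as opposed to the floor) precisely because the constraint is a lower bound that $n$ must meet or exceed; rounding down could violate $\underline\rho \leq \rho$.

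Finally, I would address the degenerate edge cases to make the statement robust: if $\underline\rho = 0$ the numerator $\log(1-\underline\rho)=\log 1 = 0$ yields $n=0$, consistent with requiring no attempts; and the formula is well-defined for any $q\in(0,1)$ and $\underline\rho\in[0,1)$, while $\underline\rho \to 1$ drives $n\to\infty$, reflecting that perfect reliability over a lossy link is unattainable in finitely many attempts. Because the proposition asks for a lower bound on the required attempts rather than an exact characterization, even a slightly loose treatment of these boundaries suffices, so I anticipate the whole argument to be short — a single chain of equivalences plus the ceiling remark.
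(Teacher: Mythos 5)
Your proposal is correct and follows essentially the same route as the paper: invert $1-(1-q)^n \geq \underline\rho$ for $n$ and round up to the next integer. The only difference is one of care, not substance --- you make explicit the sign flip from dividing by $\log(1-q)<0$ and the edge cases, which the paper's terse proof (``we activate (\ref{successProb}) for $n$'') glosses over, and your remark that the ceiling yields the \emph{smallest} feasible $n$ corresponds to the paper's separate observation, stated after the proof, that $n$ is in fact the lowest lower bound.
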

\begin{proof}
We activate (\ref{successProb}) for $n$, which leaves us with:
{\small
\begin{eqnarray}
n' = \frac{\log(1-\underline\rho)}{\log(1-q)}
\end{eqnarray}}
Repetitions must
be integer, thus, the amount of repetitions must be the next larger integer in order to ensure that $n$ is not
underestimated. It follows the proposition that $n$ is a lower bound that
ensures $\underline\rho$.
\end{proof}
It can further be proven, by contradiction, that $n$ is in fact the
lowest lower bound by differentiating the two cases where $n'$ is
or is not integer.
\subsection{1 Link, $k$ Packets}
The first generalization of the network addresses the transmission
of $k$ packets over a single link.
\begin{prop}
\label{prop:kPackets}
For $k$ packets to be independently transmitted from sender $s$ to receiver $r$,
a lower bound of required attempts $n_i$ per packet that ensures a 
reliability $\underline\rho$ for all of the $k$ packets to
arrive can be obtained by
{\small
\begin{eqnarray}
n_i = \left\lceil \frac{\log(1-\underline\rho_i)}{\log(1-q)}
\right\rceil
\label{repetitions2}
\end{eqnarray}}
with
{\small\begin{eqnarray}
\underline\rho_i = \underline\rho^{\frac{1}{k}}
\end{eqnarray}}
\end{prop}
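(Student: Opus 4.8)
The plan is to reduce this $k$-packet claim to the single-packet result of Proposition~\ref{prop:1link1packet} by decomposing the joint reliability requirement into $k$ independent per-packet requirements. I would let $A_i$ denote the event that packet $i$ is delivered within its $n_i$ attempts. By the stochastic independence of transmission events assumed in the model, the probability that \emph{all} $k$ packets arrive factorises as $P\bigl(\bigcap_{i=1}^{k} A_i\bigr)=\prod_{i=1}^{k} P(A_i)$, so the end-to-end demand of (\ref{mat:reliability}) reads $\prod_{i=1}^{k} P(A_i)\geq\underline\rho$.

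The central step is to allocate this product constraint evenly across the $k$ packets. Since every packet traverses the same link with the identical reception rate $q$, symmetry makes the uniform target $\underline\rho_i=\underline\rho^{1/k}$ the natural choice: it satisfies $\prod_{i=1}^{k}\underline\rho_i=\bigl(\underline\rho^{1/k}\bigr)^{k}=\underline\rho$, so the joint requirement is met as soon as each packet individually attains reliability at least $\underline\rho_i$. Each per-packet subproblem is then exactly a ``1 link, 1 packet'' instance with demand $\underline\rho_i$, and Proposition~\ref{prop:1link1packet} applies verbatim, yielding $n_i=\lceil\log(1-\underline\rho_i)/\log(1-q)\rceil$, which is (\ref{repetitions2}).

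To close the argument I would verify that the ceiling does not break the bound. Because $0<1-q<1$, the achieved per-packet reliability $1-(1-q)^{n_i}$ from (\ref{successProb}) is nondecreasing in $n_i$, so rounding $n_i$ up can only raise it; hence $P(A_i)=1-(1-q)^{n_i}\geq\underline\rho_i$ for every $i$. Multiplying these $k$ inequalities gives $\prod_{i=1}^{k}P(A_i)\geq\prod_{i=1}^{k}\underline\rho_i=\underline\rho$, re-establishing (\ref{mat:reliability}) and completing the proof. Note that the ceiling here works in our favour, so this final consistency check is the easy direction.

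The step I expect to be the main obstacle is justifying the \emph{uniform} split $\underline\rho_i=\underline\rho^{1/k}$ beyond mere feasibility. It is only one of infinitely many allocations satisfying $\prod_i\underline\rho_i=\underline\rho$, so if one wants the resulting $n_i$ to be a genuine lower bound on the effort required, one must argue that equal allocation also minimises the total $\sum_i n_i$. A short Lagrange-multiplier computation on the continuous relaxation that maximises $\sum_i\log(1-\rho_i)$ subject to $\sum_i\log\rho_i=\log\underline\rho$ forces $\rho_i/(1-\rho_i)$ to be constant across $i$, hence all $\rho_i$ equal; the delicate part is that this continuous optimality need not survive the ceiling operation, so strict global minimality of $\sum_i n_i$ would require additional care. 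If the proposition is read as a purely sufficient condition, however, the decomposition plus equal split plus Proposition~\ref{prop:1link1packet} already settles it.
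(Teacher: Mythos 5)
Your proof follows essentially the same route as the paper's: factorize the joint success probability via the stochastic independence assumption, split the demand uniformly as $\underline\rho_i=\underline\rho^{1/k}$ so that $\prod_{i=1}^{k}\underline\rho_i=\bigl(\underline\rho^{1/k}\bigr)^{k}=\underline\rho$, and invoke Proposition~\ref{prop:1link1packet} per packet. Your extra material — the explicit check that the ceiling only raises the achieved per-packet reliability, and the Lagrange-multiplier discussion of whether the uniform split minimises $\sum_i n_i$ — goes beyond the paper, which treats the proposition as a pure sufficiency claim (correctly, given its wording) and stops at the verification that the uniform split meets the product constraint.
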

\begin{proof}
In order to ensure reliability $\underline\rho \in [0,1]$, we know that $n$
repetitions are required for 1 packet to arrive, according to
(\ref{repetitions}) to ensure (\ref{mat:reliability}). For $k$ stochastically
independently transmitted packets, we must find
$\underline\rho_1,\ldots,\underline\rho_k$ that ensure both (\ref{repetitions2}) and
{\small\begin{eqnarray}
\underline\rho \leq \prod_{i =1}^k \underline\rho_i \label{mat:relProd}
\end{eqnarray}}
For $\underline\rho_i \in [0,1], i \leq k$, (\ref{repetitions2}) holds
trivially due to (\ref{repetitions}). Without loss of generality, we choose
$\underline\rho_i = \underline\rho_j = \underline\rho^\frac{1}{k}, \forall i,j \
\leq k$. Since $\underline\rho$ must be in $[0,1]$, $\underline\rho_i$ is too.
We show (\ref{mat:relProd}) by
{\small\begin{eqnarray}
\prod_{i=1}^k \underline\rho_i = \prod_{i=1}^k \underline\rho^\frac{1}{k} =
(\underline\rho^\frac{1}{k})^k = \underline\rho
\end{eqnarray}}
\end{proof}

Since $q$ and $\underline\rho_i$ are identical for all $k$ transmissions, is
$n_i$, the number of attempts per transmission, identical for all $k$
transmissions too.
We therefore claim the total amount of attempts to ensure $\underline\rho$ to be
{\small\begin{eqnarray} n = k \cdot n_i
\end{eqnarray}}
\subsection{$l$ Links, $k_1,\ldots,k_l$ Packets}
We further generalize the guarantee to $l$ senders $t_1,\ldots,t_l$ with link
qualities $q_1,\ldots,q_l$ and $k_1,\ldots,k_l$ packets, and
reliability demand $\underline\rho$.
\begin{prop}
\label{prop:oneReceiver}
Given the transmitters $t_1,\ldots,t_l$, the packet
reception rates $q_1,\ldots,q_l$ to a receiver $r$, and  $k_1,\ldots,k_l$
packets for a stochastically independent transmission at each transmitter, a
reliability of $\underline\rho$ for all packets to arrive in $r$ can be guaranteed by $n_i$ repetitions for each
packet, according to:
{\small\begin{eqnarray}
n_i = \left\lceil \frac{\log(1-\underline\rho_i)}{\log(1-q_i)} \right\rceil
\label{repetitions3} \label{math:n2n}
\end{eqnarray}}
with
{\small\begin{eqnarray}
\underline\rho_i = \underline\rho^{\frac{1}{l\cdot k_i}}
\end{eqnarray}}
\end{prop}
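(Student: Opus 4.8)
The plan is to reduce this statement to the two preceding propositions, since the only genuinely new ingredient is the distribution of the global reliability budget $\underline\rho$ across $l$ links carrying different numbers of packets. First I would treat every one of the $\sum_{i=1}^l k_i$ packets as an independent transmission and assign to each packet on link $i$ a per-packet reliability target $\underline\rho_i$. Applying Proposition \ref{prop:1link1packet} to a single link with reception rate $q_i$ and target $\underline\rho_i$ immediately yields that $n_i = \lceil \log(1-\underline\rho_i)/\log(1-q_i) \rceil$ attempts suffice to guarantee that target, which is precisely (\ref{repetitions3}). Thus the repetition count is justified as soon as the targets $\underline\rho_i$ are fixed, and the whole burden shifts onto the choice of those targets.

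The core of the argument is then the choice $\underline\rho_i = \underline\rho^{1/(l k_i)}$ and the verification that it delivers the desired end-to-end guarantee. The key steps, in order, are: (i) confirm that $\underline\rho_i \in [0,1]$ whenever $\underline\rho \in [0,1]$, so that Proposition \ref{prop:1link1packet} is applicable on each link; (ii) observe that, by the stochastic independence of the transmissions, the probability that all packets reach $r$ is bounded below by the product of the per-packet reliabilities; (iii) aggregate per link, noting that the $k_i$ packets on link $i$ contribute a factor of at least $\underline\rho_i^{k_i} = (\underline\rho^{1/(l k_i)})^{k_i} = \underline\rho^{1/l}$, which mirrors the collapse used in the proof of Proposition \ref{prop:kPackets}; and (iv) multiply the $l$ link factors to obtain $\prod_{i=1}^l \underline\rho^{1/l} = (\underline\rho^{1/l})^l = \underline\rho$, thereby establishing (\ref{mat:relProd}) and hence the constraint (\ref{mat:reliability}).

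The step I expect to carry the real weight is the passage (ii)--(iii): making precise that the product of the guaranteed per-packet lower bounds is itself a valid lower bound on the joint success probability. This relies both on the independence assumption and on the fact that each $n_i$ is rounded \emph{up}, so the actually achieved per-packet reliability is at least $\underline\rho_i$ rather than exactly equal to it; the joint bound then follows because a product of factors, each no smaller than its target, is no smaller than the product of the targets. Everything else --- the exponent bookkeeping and the telescoping of $\underline\rho^{1/(l k_i)}$ back into $\underline\rho$ --- is routine algebra that parallels the earlier propositions, and I would keep that part brief.
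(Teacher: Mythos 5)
Your proposal is correct and takes essentially the same route as the paper: the same budget split $\underline\rho_i = \underline\rho^{1/(l k_i)}$, Proposition \ref{prop:1link1packet} for the repetition counts, and independence to telescope the per-link factors $\underline\rho^{1/l}$ back into $\underline\rho$. The only differences are presentational --- you argue bottom-up (fix per-packet targets, then verify the product) where the paper allocates top-down via Proposition \ref{prop:kPackets}, and you make explicit the monotonicity step (each achieved per-packet reliability is at least its target, so the product of targets lower-bounds the joint success probability) that the paper leaves implicit.
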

\begin{proof}
Because of the demand $\underline\rho \leq \rho$ and the independence assumption, the constraint
\begin{eqnarray}
\underline\rho \leq \rho = \prod_{i=0}^l \rho_i \label{mat:proof3}
\end{eqnarray} 
has to be fulfilled where $\rho_i$ is the reliability of link $i$.
Similar to the proof for (\ref{repetitions2}), each link ensures
reliability demand $\underline\rho_i \leq \rho_i$ according to
(\ref{mat:proof3}), with $\rho_i \geq \underline\rho^{\frac{1}{l}} =
\underline\rho_i$.
Applying Proposition (\ref{prop:kPackets}) for link $i$, we get that
$\underline\rho_i^\frac{1}{k_i}$ guarantees the reliability
$\underline\rho_i $ for each packet by applying $n_i$ repetitions.
It follows that
\begin{eqnarray}
\underline\rho_i^\frac{1}{k_i} =
(\underline\rho^\frac{1}{l})^\frac{1}{k_i} = \rho^\frac{1}{l\cdot k_i}&, k_i>0
\end{eqnarray}
which concludes the proof.
\end{proof}
The total number of required attempts is then calculated by
{\small\begin{eqnarray} 
n = \sum_{i=1}^l k_i n_i
\end{eqnarray}}
with $n_i$ from Proposition (\ref{prop:1link1packet}). Further, because of the
stochastic independence assumption for all transmissions, is the reliability
$\underline\rho$ that all packets arrive at different receivers $r_1,\ldots,r_l$ for transmitters  $t_1,\ldots,t_l$  equal to the former
proposition, with $q_1, \ldots, q_l$ defined for the respective sender, receiver
pair (see Figure \ref{fig:kSenders}).
\begin{center}
\begin{overpic}[width=\columnwidth]{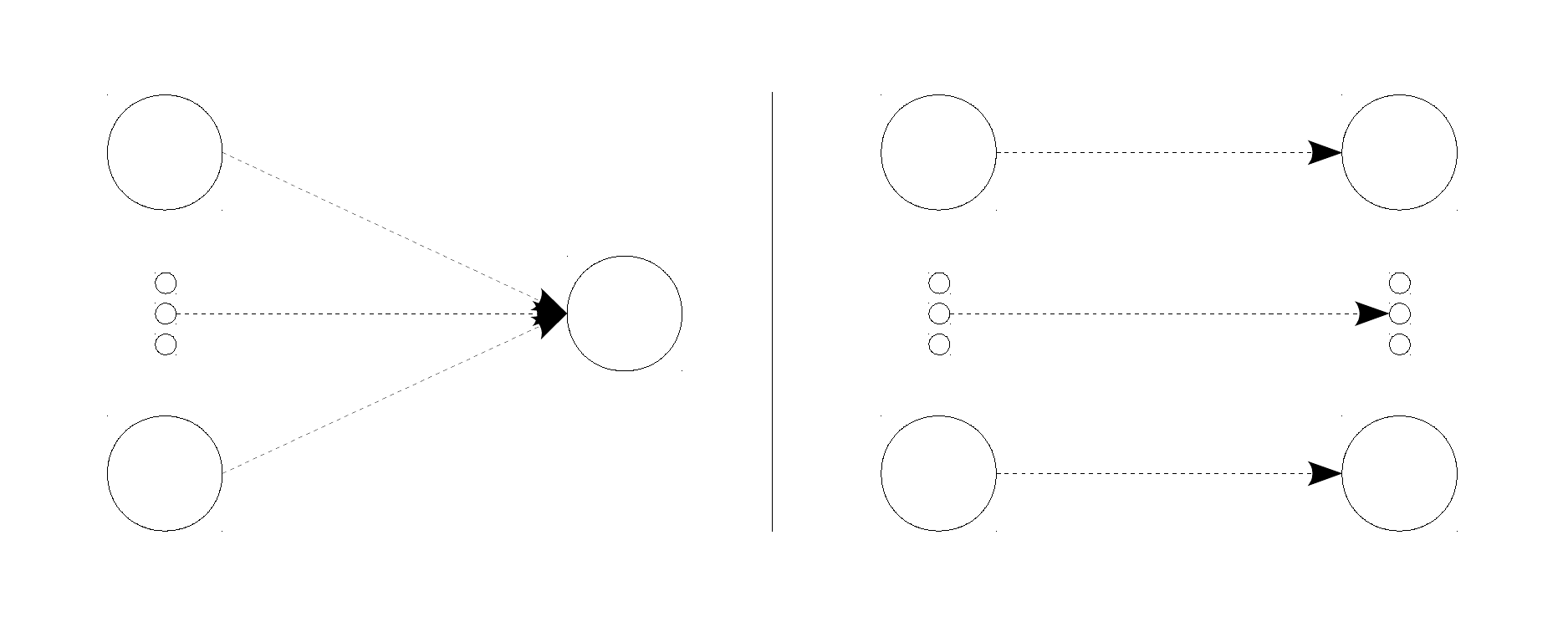}
 \put (9.5,29.5) {\large$\displaystyle{t_{1}}$}
 \put (5.5,25) {$\small\displaystyle{k_{1}}$}
 \put (9.5,9) {\large$\displaystyle{t_{l}}$}
 \put (5.5,4.5) {$\small\displaystyle{k_{l}}$}
 \put (39,19.5) {\large$\displaystyle{r}$}
 \put (24, 27) {\large$\displaystyle{q_{1}}$}
 \put (20, 21.5) {\large$\displaystyle{q_{i}}$}
 \put (24, 12) {\large$\displaystyle{q_{l}}$}
 
 \put (58.5,29.5) {\large$\displaystyle{t_{1}}$}
 \put (55,25) {$\small\displaystyle{k_{1}}$}
 \put (58.5,9) {\large$\displaystyle{t_{l}}$}
 \put (55,4.5) {$\small\displaystyle{k_{l}}$}
 \put (88,29.5) {\large$\displaystyle{r_1}$}
 \put (88.5,9) {\large$\displaystyle{r_l}$}
 \put (73, 28) {\large$\displaystyle{q_{1}}$}
 \put (73, 17.5) {\large$\displaystyle{q_{i}}$}
 \put (73, 7.5) {\large$\displaystyle{q_{l}}$}
\end{overpic}
  \captionof{figure}{Left: Transceivers $t_1,\ldots,t_l$ with the same
  destination $r$ for their $k_i$ packets (Proposition
  3). Right: The $l$ transceivers have different
  destinations $r_1,\ldots, r_l$.
  The reliability $\underline\rho$ is in both cases guaranteed at $r$
  ($r_1,\ldots,r_l$) with $n_i$ repetitions according to (\ref{math:n2n}).}
  \label{fig:kSenders} 
\end{center}
\subsection{Entire Network}
In order to generalize the model for end-to-end reliability $\underline\rho$ of
an entire network, we first introduce some restrictions. First, given a schedule
frame of length $m$, we assume that all packets to be transmitted are available
at their source at schedule slot 1, and that no new packets are created during
the schedule execution, or that those packets are enqueued for the next
schedule frame. $\underline\rho$ is the guarantee that all packets arrive
until the end of the schedule frame, after the execution of all $m$ slots.
\begin{prop}
\label{prop4}
Given a network with a single-path routing table $R$, sinks $\mathcal{S}$, 
transceivers $\mathcal{T}$, a link quality matrix $Q$,
and a total amount of $k_1,\ldots,k_{|\mathcal{T}|}$ packets passing each
transceiver $t, 1 \leq t \leq \mathcal{T}$. on their way towards any of the
sinks via multi-hop. An end-to-end reliability of $\underline\rho$ can be
guaranteed if $n_t$ repetitions are executed for each route from $t$ to $ R_t$ for each packet over the whole network, according to:
{\small\begin{eqnarray}
n_t = \left\lceil
\frac{\log(1-\underline\rho^\frac{1}{|\mathcal{T}| \cdot
k_t})}{\log(1-q_{t R_t})}\right\rceil , \forall t \in \mathcal{T}
\label{repetitions4}
\end{eqnarray}}
and
{\small\begin{eqnarray}
n = \sum_{t\in \mathcal{T}} k_t n_t. \label{mat:repCalc}
\end{eqnarray}}
\end{prop}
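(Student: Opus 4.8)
The plan is to reduce Proposition \ref{prop4} to Proposition \ref{prop:oneReceiver} by reading off the right collection of single-hop links from the routing table. Under single-path routing every transceiver $t \in \mathcal{T}$ has exactly one parent $R_t$, so the network decomposes into precisely $|\mathcal{T}|$ directed links $t \to R_t$, and the $k_t$ packets that pass through $t$ (its own samples together with everything it relays) are exactly the packets that must cross that link. I would therefore instantiate Proposition \ref{prop:oneReceiver} with $l = |\mathcal{T}|$, link qualities $q_{t R_t}$, per-link packet counts $k_t$, and the respective parents as receivers --- the ``different receivers'' case depicted on the right of Figure \ref{fig:kSenders}. The per-packet reliability budget then becomes $\underline\rho^{1/(|\mathcal{T}| \cdot k_t)}$ and the repetition count $n_t$ of (\ref{repetitions4}) is exactly what (\ref{repetitions3}) prescribes for that link, with the total $n = \sum_t k_t n_t$ following from (\ref{mat:repCalc}).

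The crux is to justify that the end-to-end event factors as a product over these single-hop links. First I would define, for each link, the event $A_t$ that its allotted transmission slots deliver all $k_t$ packets presented to it within $n_t$ attempts per packet; this is a statement purely about the transmission outcomes on link $t$. By the stochastic independence assumption across transmitters, the events $\{A_t\}_{t \in \mathcal{T}}$ are mutually independent. Since routing is single-path, each packet has a unique path to a sink and arrives if and only if every link on that path delivers it; consequently all packets arrive whenever $\bigcap_{t} A_t$ holds, giving the lower bound $\rho \ge P\!\left(\bigcap_{t} A_t\right) = \prod_{t} P(A_t)$.

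It then remains to bound each factor from below. Assigning every link the share $\underline\rho_t = \underline\rho^{1/|\mathcal{T}|}$ keeps $\prod_t \underline\rho_t = \underline\rho$, exactly as in the proof of Proposition \ref{prop:oneReceiver} via (\ref{mat:proof3}). Applying Proposition \ref{prop:kPackets} to link $t$ with $k = k_t$ packets splits this share into the per-packet target $\underline\rho_t^{1/k_t} = \underline\rho^{1/(|\mathcal{T}| \cdot k_t)}$, and Proposition \ref{prop:1link1packet} guarantees that $n_t$ repetitions achieve a per-packet success of $1 - (1-q_{t R_t})^{n_t} \ge \underline\rho^{1/(|\mathcal{T}| \cdot k_t)}$, hence $P(A_t) \ge \underline\rho^{1/|\mathcal{T}|}$. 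Multiplying over the $|\mathcal{T}|$ links yields $\rho \ge \prod_t \underline\rho^{1/|\mathcal{T}|} = \underline\rho$, which is the constraint (\ref{mat:reliability}).

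The step I expect to be the main obstacle is the independence argument in the middle paragraph: strictly speaking the number of packets actually forwarded on link $t$ is itself random, since a packet only reaches $t$ if all upstream links succeeded, so a naive ``reliability of link $t$'' would be conditional on upstream behaviour. I would sidestep this by phrasing $A_t$ as a property of the channel realizations in the slots reserved for link $t$ rather than of the packets that happen to arrive; those realizations are genuinely independent across links, and $\bigcap_t A_t$ is a subset of the all-arrive event, so the product remains a valid lower bound even if the per-link delivery events are correlated. A secondary point to verify is that the counts $k_t$ faithfully aggregate relayed and locally generated traffic, so that no per-link transmission task is omitted from $n = \sum_t k_t n_t$.
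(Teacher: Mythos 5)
Your proposal is correct and takes essentially the same route as the paper, whose proof of Proposition~\ref{prop4} is exactly this reduction to Proposition~\ref{prop:oneReceiver} with $l = |\mathcal{T}|$ links $t \to R_t$ and $k_t$ the aggregated traffic of the subtree rooted at $t$. Your middle paragraph, defining $A_t$ in terms of channel realizations on the slots reserved for link $t$ so that the product lower bound survives the upstream conditioning issue, rigorously closes a subtlety that the paper's one-sentence proof (``identical to Proposition~\ref{prop:oneReceiver}, with the difference \ldots'') passes over silently.
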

Here, $R_t$ is the parent of $t$ in routing table $R$.
The proof is identical to Proposition (\ref{prop:oneReceiver}), with the
difference that the amount of links for a single-path routing table is $|\mathcal{T}|$ and the amount of
packets $k_t$ to be transmitted over link $r_{t{R_t}}$ is the sum of
all packets produced in the sub-tree of $R$ with root $t$.
\subsection{SchedEx}
Scheduling algorithms in the literature ({\em e.g.}
\cite{Ergen:2010, yan2014hypergraph}) follow the same general pattern:
\begin{itemize}
  \item \textbf{While} not all packet-buffers empty
  \begin{enumerate}
  \item \textbf{apply scheduling algorithm} to decide the next slot(s)
  \item \textbf{append the slot(s)} to the schedule frame 
  \item \textbf{update packet buffers} according to the transitions
  \item update meta-data (if required)
\end{enumerate}
\end{itemize}
SchedEx, in Algorithm 1, implements the packet buffer update in step 3 for
each scheduled transceiver $t  \in \mathcal{T}$ to ensure end-to-end reliability
by a controlled packet move delay.
The algorithm maintains a vector of down-counters for all transceivers $t \in \mathcal{T}$. We introduce the vector $\overrightarrow\tau$, which for each transceiver $t\in \mathcal{T}$ defines the required number of attempts $\tau_t
= n_t$ to ensure the link-reliability according to Proposition
(\ref{prop4}), and initialize counter $c_t$ by $\tau_t$. Each time a transmission attempt in
$t$ has been registered in a slot, the counter for $t$ is decremented. The
scheduler moves a packet from the transmitter buffer to the receiver buffer once
the counter has been hit $\tau_t$ times. Thereafter, the counter is
reinitialized by $\tau_t$. This procedure that implements redundancy into the
schedules, guarantees a reliability not below $\underline\rho$ since
$\overrightarrow\tau$ is created from (\ref{repetitions4}) for each transition.
\begin{algorithm}
\algsetup{linenosize=\small}
\small
\begin{algorithmic}[1]
\FUNCTION $updatePacketBuffers(t, \mathcal{S}, b, c, \tau)$
\REQUIRE Scheduled Transmitter $t\in \mathcal{T}$, Sinks $\mathcal{S}$,
Buffer State $b$, Counter Vector $c$, Repetition Vector $\tau$
\IF{$b_t > 0$}
\STATE $c_t\text{-}\text{-}$ \COMMENT count down attempts for transmitter $t$
\IF{$c_t = 0$}
\STATE $b_t\text{-}\text{-}$ \COMMENT remove packet from transmitter buffer
\IF{$R_t \notin \mathcal{S}$}
\STATE $b_{R_t}\text{++}$ \COMMENT add packet to receiver buffer
\ENDIF
\STATE $c_t \leftarrow \tau_t$ \COMMENT update counter
\ENDIF
\ENDIF
\label{alg:update}
\end{algorithmic}
\end{algorithm}

SchedEx is utilized as follows. The chosen scheduling algorithm $SA$, extended
by reliability-awareness through SchedEx (Algorithm 1), executes on the central
manager (\emph{gateway}). The link quality matrix $Q$ with lower bounds is
derived ({\em e.g.} using \cite{fonseca2007four}), and routing table $R$ is created using an
arbitrary routing algorithm. SchedEx is run once at network start-time to
derive a schedule solving the optimization problem in (\ref{mat:o}). This
schedule is then deployed. SchedEx is re-run in case of routing and link
quality changes that violate the end-to-end reliability constraint $\underline\rho$.
A network can be separated into sub-networks with SchedEx running co-located, but the results presented in this
paper assume a centralized approach.

\section{Simulations}
\label{experiments}
The simulation setup is summarized in Table \ref{tab:setup}.
Since the results in \cite{yan2014hypergraph} suggest that single-path routing
results in a slightly better performance than any-path routing, and considering
the complexity increase of the problem model as well as for the deployment of
scheduled any-path routing in real world, we concentrate on single-path routing
in this paper. For comparison reasons, we take the approach from
\cite{yan2014hypergraph} where the routing table is determined by the Dijkstra shortest path algorithm applying the
expected transmission count (ETX) metric \cite{de2005high}.
Extensions of ETX using signal-to-interference-plus-noise ratio (SINR) were not considered in this paper due to the deficiencies of received signal strength (RSS), the quantifier of received signal strength in all IEEE 802.15.4-compliant devices. The hardware often cannot provide reliable readings. For an in-depth investigation in industrial environments, see \cite{barac2014ubiquitous}.
All experiments were conducted on a stationary computer with 4 Intel Xeon processors and 6 gigabyte of random access memory.
\begin{table}
\centering
\begin{tabular}{l|r}
\textbf{Parameter}&\textbf{Value}\\
\hline
Routing&Single-Path\\
Routing Algorithm&Shortest Path (ETX)\\
Network Size&$\{50, 200\}$\\
Channel Model&Rayleigh Fading\\
$a_n$&$67.7328$\\
$g_n$&$0.9819$\\
$\gamma_{pn}$&$4.2935$\\
$\alpha$&$3.3$\\
$\lambda$&$0.5$\\
SNR ($\gamma_n$)&$60$dB ($50$dB)\\
Transmission Range $r_t$&$30$ units\\
Interference Range $r_i$&$60$ units\\
\end{tabular}
\caption{Simulation setup parameters.}
\label{tab:setup}
\end{table}

\subsection{Topologies}
We reconstruct the simulation model from \cite{yan2014hypergraph} where $n$ nodes are
randomly distributed around the single sink within a circular area by a radius
of 100 units. For conformity, the first $\frac{n \lambda}{\lambda +1}$ nodes are
distributed uniformly at random within the inner circle of radius
$\frac{100}{\sqrt 2}$, with $\lambda \in [0,1]$. The remaining nodes are
randomly distributed in the outer partition of the circle. $k$ among the $n$ nodes are randomly assigned to
be the sources, each with one packet waiting to be transmitted to the sink. We model
the node-to-node packet reception rates using Rayleigh fading
\cite{liu2004cross} with the average packet loss formula reported in \cite{yan2014hypergraph} and identical
model parameters ($a_n=67.7328, g_n=0.9819, \gamma_{pn}=4.2935$).
Paper \cite{yan2014hypergraph} reports on worst-case algorithm complexity.
We further include an analysis of runtimes and performance variances to our
investigations, because execution times are often far from worst-case.
The transmission range $r_t$ is set to $30$ units, and the interference range
$r_i$ to $60$ units, while $\lambda$ is fixed to $0.5$ for all topologies.
The amount of sources with a waiting packet is set to $k=|\mathcal{T}|$.
We choose the signal to noise ratio $\gamma_0$ for the Rayleigh model to be $60$
dB, as assumed in a simulation series in \cite{yan2014hypergraph}.
All heuristic results are reported with algorithm runtime
and standard deviation over 10 repetitions. The reliability constraint $\underline\rho$ is
investigated for $\underline\rho \in \{0.9, 0.999, 0.99999\}$. The 20 randomly
created networks of sizes 50 and 200 (10 each), including their
calculated link quality levels are publicly available\footnote{\url{https://github.com/feldob/wsnScenarios}}.
For industrial networks, around 50 nodes can be anticipated
realistically, as discussed in \cite{book:iwsn}, which is why we focus on
relatively small numbers.
\subsection{Scheduling Algorithms}
We utilize the scheduling algorithms compared in \cite{yan2014hypergraph},
namely Node-based Scheduling, Level-based Scheduling, Dedicated Scheduling,
Shared Scheduling following the same setup. Node-based and Level-based
scheduling are both coloring-based algorithms. Given the conflict matrix, they
build a conflict graph of the network, assigning conflicting sensors distinct colors in order to ensure that they do not appear in the same slot for concurrent transmission, thus precluding within network interference.
Node-based Scheduling does not assume any order of the sensor assignment,
whereas Level-based Scheduling always schedules sensors close to the sink
first, and those furthers afar last. We assume that the coloring
algorithm for Node-based and Level-based Scheduling used in
\cite{yan2014hypergraph} is the basic algorithm presented in \cite{Ergen:2010}, even though this information cannot be found in
\cite{yan2014hypergraph}.\footnote{The choice of coloring algorithm does not
impact the conclusions in this paper, because all SchedEx and Incrementer
simulations are run under the same assumptions.} Dedicated Scheduling sorts the
sensors according to their reliability, and therefore ensures that stable links are scheduled first which leads to a better expected performance of the network.
The algorithm sequentially adds new sensor to the current slot, until a conflict
appears, which leads to a new slot being added to the graph with the conflicting
sensor. In its version described in \cite{yan2014hypergraph} it dedicates
each scheduled transmission to packets from a dedicated source, therefore its
name. Shared Scheduling allows the transmitter to share slots among different
sources, still reducing this to a well-defined set of (in the case of the paper)
two sources. Shared Scheduling repeats slots for valid concurrent
transmissions, because the reliability of sending $k$ packets in $n$ slots is
higher than the reliability of $k$ times one submission in one slot.
\subsection{Incrementer Approach}
We compare SchedEx to the Incrementer algorithm introduced in
\cite{yan2014hypergraph}, which we further refer to as the \emph{Incrementer}.
The Incrementer requires a valid schedule with respect to the constraints c1-c5
in Section \ref{theory}. It further requires that each sensor with transmission
rights in a slot dedicates the transmission to packets from a pre-defined set of
potential sources.
In \cite{yan2014hypergraph}, the algorithm's stop criterion is a
maximum latency bound and the objective is to maximize the reliability for that bound.
The resulting schedule is provably locally optimal, assuming that only existing
slots in position $i$ in the schedule may be repeated in position $i+1$, no
existing slot is altered, and no new slots are introduced. We implemented the
Incrementer, exchanging the stop-criterion by
end-to-end reliability demand $\underline\rho$, in order to make the algorithm
comparable to SchedEx. The Incrementer requires a valid schedule from the
respective scheduling algorithm, before it can improve on the reliability.

\section{Results and Discussion}
\label{discussion}
The bagplots in Figure \ref{fig:comp}
summarize the distributions of the simulation results for the scheduler without
the consideration of reliability, comparing the four algorithms with respect to
runtime and schedule size.
An optimal bag would be located in the lower left corner of the plot. Both,
Dedicated and Shared scheduling, substantially increase in runtime for larger
topologies and compared to the other algorithms. Node-based scheduling
throughout performs best in terms of runtime.
With respect to scheduling performance, no clear winner can be announced, all
competitors cover similar or overlapping ranges.
\begin{figure*}[htp]
  \begin{subfigure}[b]{0.5\textwidth}
                \includegraphics[width=\textwidth]{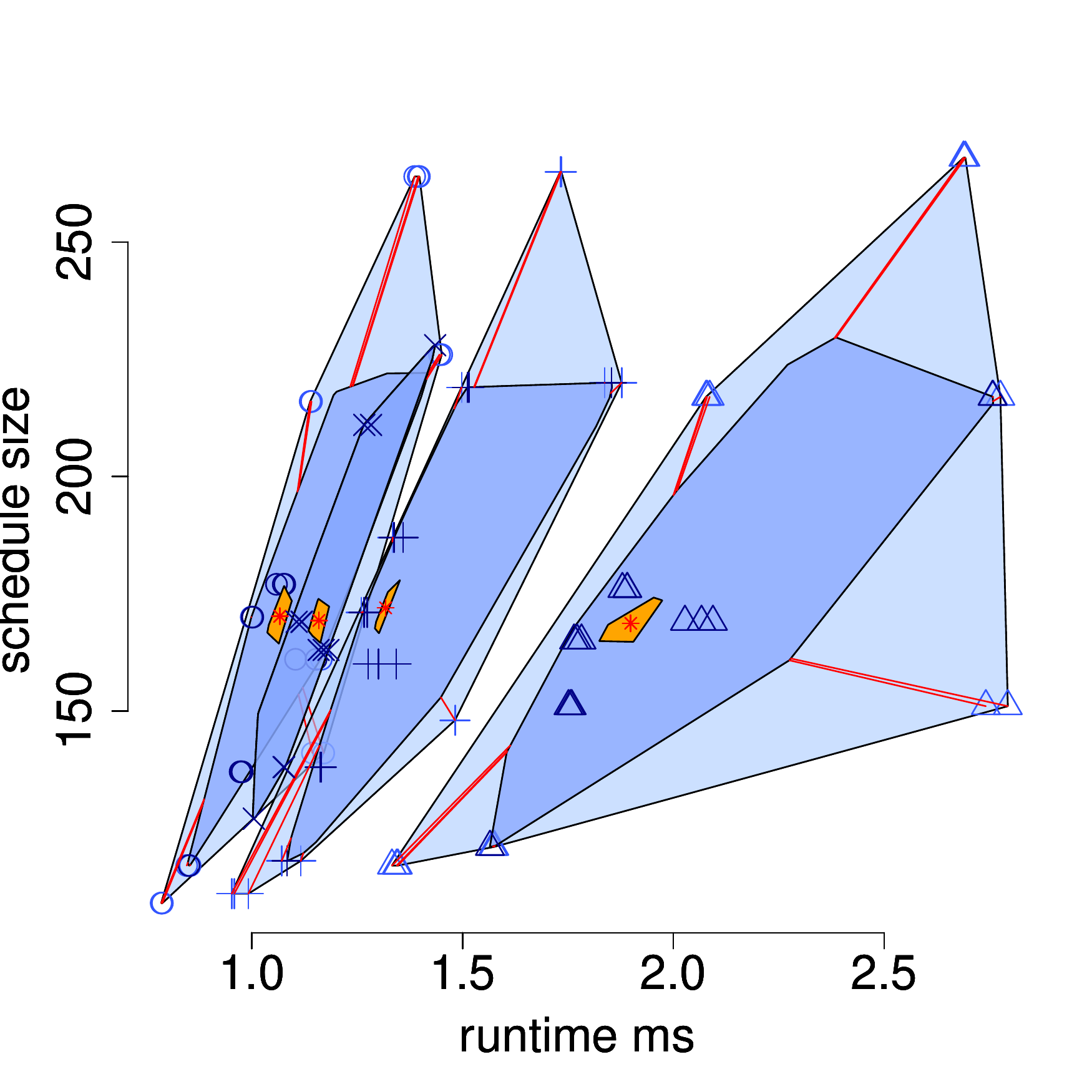}
  \caption{Topologies of size 50.}
  \label{fig:comp50}
 \end{subfigure}%
   \begin{subfigure}[b]{0.5\textwidth}
                \includegraphics[width=\textwidth]{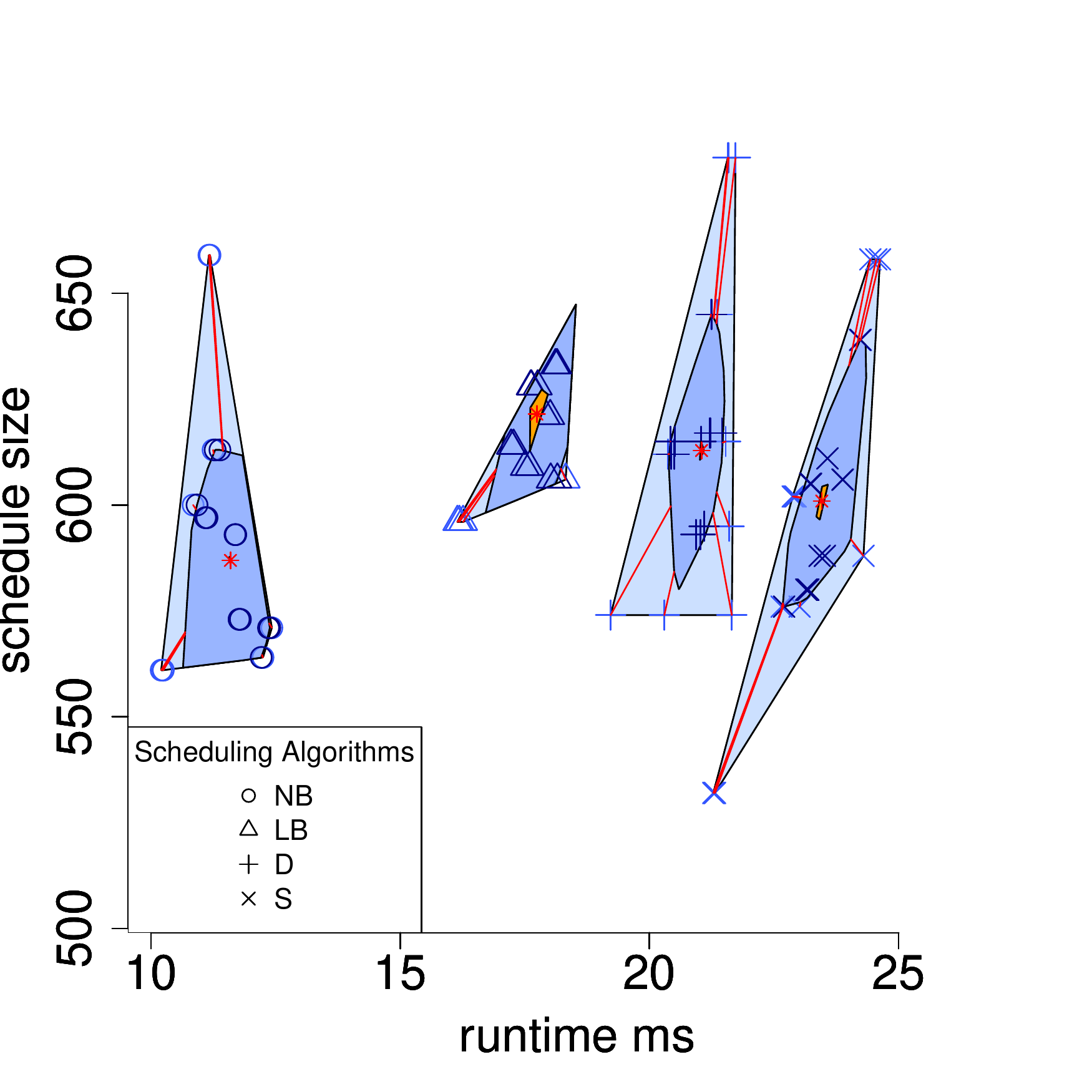}
  	\caption{Topologies of size 200.}  
	\label{fig:comp200}
 \end{subfigure}%
 \caption{Normalized bagplots comparing the scheduling algorithm performance for
 topologies of sizes 50 (left) and 200 (right). Measured runtime is plotted
 against the produced schedule size. Whereas for the smaller topologies, Dedicated and
 Shared Scheduling are only slightly slower than Node-Based
 Scheduling, do the results for the larger topologies indicate that they don't
 scale well in the size of the network. The schedule size distributions, read
 from the y-axis of the plots, overlap largely for both topology sizes, and can
 therefore not identify a winner in the comparison.}
 \label{fig:comp}
\end{figure*}
\begin{table*}
\centering
{\small
\begin{tabular}{r|l|r|r|r|r|r|r|r|r|}
\multicolumn{2}{l}{Network
Size}&\multicolumn{4}{|c|}{50}&\multicolumn{4}{|c|}{200}\\
\hline \multicolumn{2}{l}{Reliability
Algorithm}&\multicolumn{2}{|c|}{SchedEx}&\multicolumn{2}{|c|}{Incrementer}&\multicolumn{2}{|c|}{SchedEx}&\multicolumn{2}{|c|}{Incrementer}\\
\hline
$\underline\rho$&Sched. Alg.&$ms$&$|\mathcal{F}|$&$ms$&$|\mathcal{F}|$&$ms$&$|\mathcal{F}|$&$ms$&$|\mathcal{F}|$\\
\hline
0.9&Node-Based&$\mathbf{5} \pm 3$&$\mathbf{736} \pm 258$&$274 \pm 172$&$609 \pm 201$&$\mathbf{42} \pm 3$&$\mathbf{1948} \pm 102$&$2021 \pm 150$&$1640 \pm 59$\\
&Level-Based&$11 \pm 11$&$743 \pm 243$&$271 \pm 161$&$615 \pm 188$&$63 \pm 5$&$2053 \pm 93$&$2206 \pm 125$&$1742 \pm 52$\\
&Dedicated&$7 \pm 4$&$751 \pm 265$&$284 \pm 180$&$609 \pm 210$&$79 \pm 5$&$2031 \pm 101$&$2104 \pm 192$&$1668 \pm 65$\\
&Shared&$10 \pm 14$&$741 \pm 256$&$\mathbf{264} \pm 160$&$\mathbf{606} \pm 192$&$86 \pm 3$&$1962 \pm 78$&$\mathbf{1963} \pm 111$&$\mathbf{1612} \pm 41$\\
\cline{2-10}
0.999&Node-Based&$\mathbf{6} \pm 1$&$\mathbf{1083} \pm 366$&$545 \pm 318$&$987 \pm 307$&$\mathbf{64} \pm 6$&$\mathbf{2794} \pm 124$&$4090 \pm 358$&$2558 \pm 87$\\
&Level-Based&$11 \pm 3$&$1092 \pm 346$&$536 \pm 290$&$998 \pm 286$&$91 \pm 11$&$2944 \pm 121$&$4498 \pm 292$&$2712 \pm 79$\\
&Dedicated&$10 \pm 4$&$1108 \pm 375$&$544 \pm 324$&$984 \pm 318$&$111 \pm 9$&$2917 \pm 124$&$4153 \pm 404$&$2586 \pm 99$\\
&Shared&$8 \pm 2$&$1091 \pm 363$&$\mathbf{531} \pm 317$&$\mathbf{982} \pm 292$&$123 \pm 4$&$2815 \pm 82$&$\mathbf{3994} \pm 277$&$\mathbf{2511} \pm 64$\\
\cline{2-10}
0.99999&Node-Based&$\mathbf{8} \pm 2$&$\mathbf{1428} \pm 468$&$\mathbf{837} \pm 494$&$1366 \pm 415$&$\mathbf{86} \pm 6$&$\mathbf{3636} \pm 175$&$6634 \pm 667$&$3469 \pm 119$\\
&Level-Based&$13 \pm 5$&$1438 \pm 438$&$877 \pm 508$&$1380 \pm 386$&$120 \pm 7$&$3838 \pm 164$&$7459 \pm 490$&$3674 \pm 106$\\
&Dedicated&$12 \pm 5$&$1459 \pm 480$&$846 \pm 506$&$1358 \pm 428$&$144 \pm 11$&$3804 \pm 174$&$6944 \pm 756$&$3495 \pm 131$\\
&Shared&$10 \pm 2$&$1437 \pm 464$&$843 \pm 490$&$\mathbf{1356} \pm 394$&$162 \pm 7$&$3659 \pm 127$&$\mathbf{6462} \pm 543$&$\mathbf{3400} \pm 84$\\
\cline{2-10}
\hline
\end{tabular}}
\caption{Execution time in $ms$ and schedule size $|\mathcal{F}|$ for
the different scheduling algorithms under three
reliability constraints comparing SchedEx and the Incrementer from
\cite{yan2014hypergraph} with reliability stop criterion.}
\label{tab:results}
\end{table*}
See Table \ref{tab:results} for a summary of the numeric results for the
reliability ensuring simulations with SchedEx and the Incrementer.\footnote{The reported
runtimes are based on a simulator written in a high-level language, Java.
Implementing the scheduling algorithms and SchedEx in a low-level language, such
as C, may improve all runtimes by a constant factor.} Listed runtimes for the
Incrementer contain both, scheduler runtime and Incrementer runtime.
First, SchedEx is significantly faster than the Incrementer.
Figure \ref{fig:schedexTime} shows the distributions for how many times
SchedEx is faster than the Incrementer. For $\underline\rho=0.9 (0.99999)$,
SchedEx is in average $31$ ($56$) times, and thereby more than an order of
magnitude, faster.
Figure \ref{fig:schedexSize} illustrates the performance improvement
over the Incrementer in terms of schedule size. It reveals that, the harsher the
constraint, the larger the expected improvement of the schedule over the
incremental approach. For $\underline\rho=0.9$, SchedEx is expected to create
schedules that are in average $21\%$ longer than those from the Incrementer.
For $\underline\rho=0.99999$, the difference is down to $6\%$.

A signal to noise ratio of 60 dB together with a transmission range of 30 units
lead to link qualities not below 67\%, which we see as a reasonable assumption for
networks that are meant to support mission-critical applications. We would,
however, like to mention that choosing a signal to noise ratio of
50 dB leads to SchedEx being \emph{2-3 orders of magnitude} faster than the
Incrementer, while schedule sizes for the harsh scenarios with $0.99999$ lead to in average
$5.3\%$ \emph{shorter} schedules than the Incrementer. This can partly be
explained by the fact that the required schedules become much longer, and that
the Incrementer does not scale well in the size of the required schedule frame.
SchedEx performs better in the severity of the reliability constraint, and that
substantially faster.

For a given scheduling algorithm $SA$ with time complexity
$C_{Time}(SA)$, SchedEx increases the time complexity by a summand and a
factor according to:
\begin{eqnarray}
SchedEx(SA) \leq O(|\mathcal{T}|) + \tau_{max} \times C_{Time}(SA)
\end{eqnarray}
The summand $O(|\mathcal{T}|)$ explains the creation of vector
$\overrightarrow\tau$, and $\tau_{max} = \max{(\overrightarrow{\tau})}$
explains the worst case increase of a schedule.
The smaller the signal-to-noise ratio, the proportionally larger is $\tau_{max}$.

The Incrementer requires two steps. First, the initial schedule
creation of $C_{Time}(SA)$, and second, the incremental phase until the stop
criterion has been reached, according to the incremental algorithm in
\cite{yan2014hypergraph}.
The incremental phase is of complexity $O(|\mathcal{T}|D^2)$, with $D$ being
the size of the eventual frame when reaching the stop criterion, as it in each iteration
requires a recalculation of a convex improvement metric over the
iteratively increasing schedule frame $F$. Thus, the time complexity for the
Incrementer is
\begin{eqnarray}
Incrementer(SA) = C_{Time}(SA) + \Theta (|\mathcal{T}|D^2),
\end{eqnarray}
with $D$ not known a-priori. The smaller the signal-to-noise ratio,
the proportionally larger becomes $D$. Thus, where SchedEx scales linearly in
the harshness of the problem does the Incrementer scale quadratic. Further,
the average topology requires significantly fewer elementary operations
than the worst-case with SchedEx, whereas the Incrementer grows relative to
$|\mathcal{T}|D^2$.
For the Incrementer, the experiments reveal that the scheduling phase with time
complexity $C_{Time}(SA)$ stands in our simulations in average for only
about $0.003$\% of its total runtime, and does therefore not weigh in significantly.
\begin{figure}[htp]
  \begin{subfigure}[b]{0.25\textwidth}
  \includegraphics[width=\columnwidth]{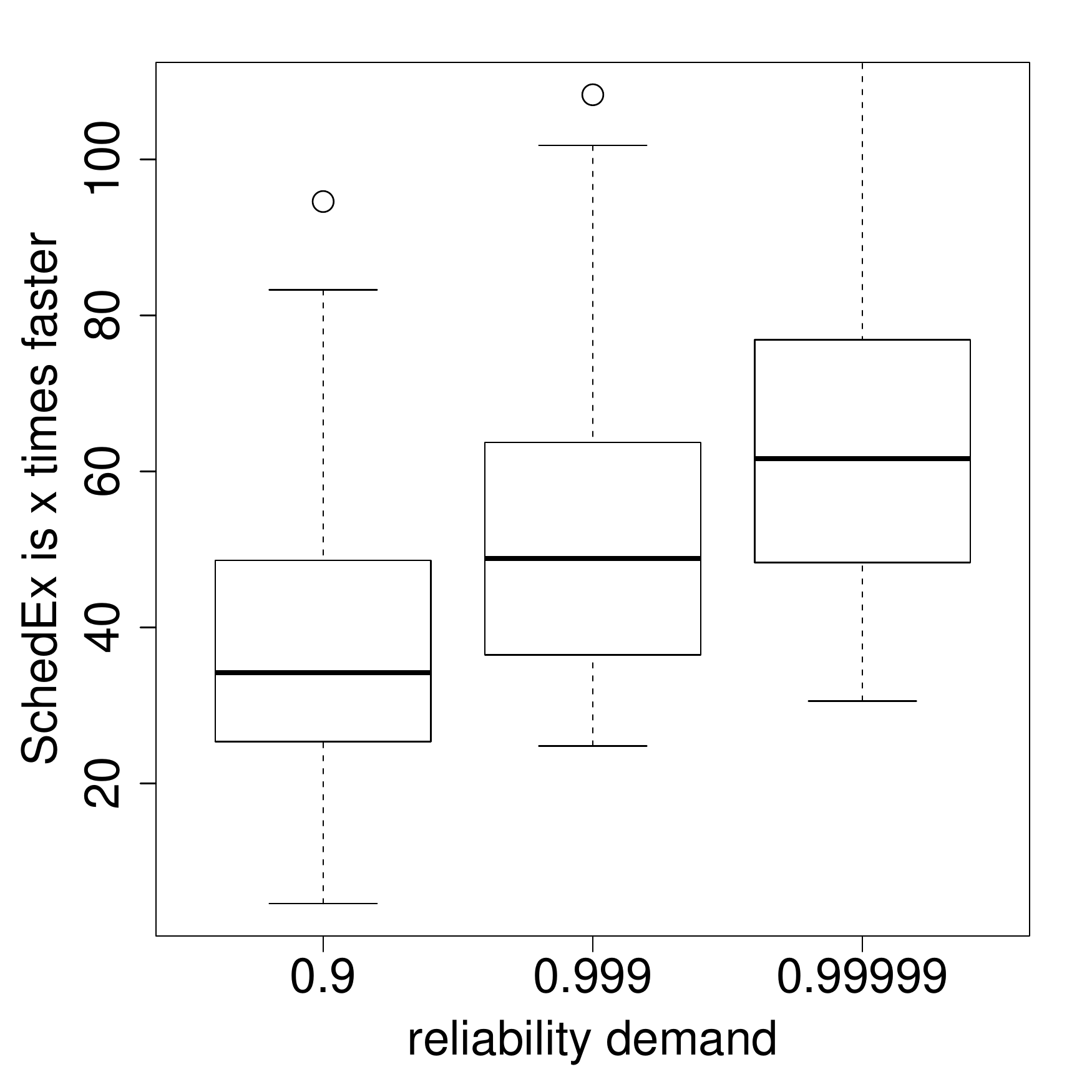}
  \caption{} 
  \label{fig:schedexTime}
 \end{subfigure}%
   \begin{subfigure}[b]{0.25\textwidth}
  \includegraphics[width=\columnwidth]{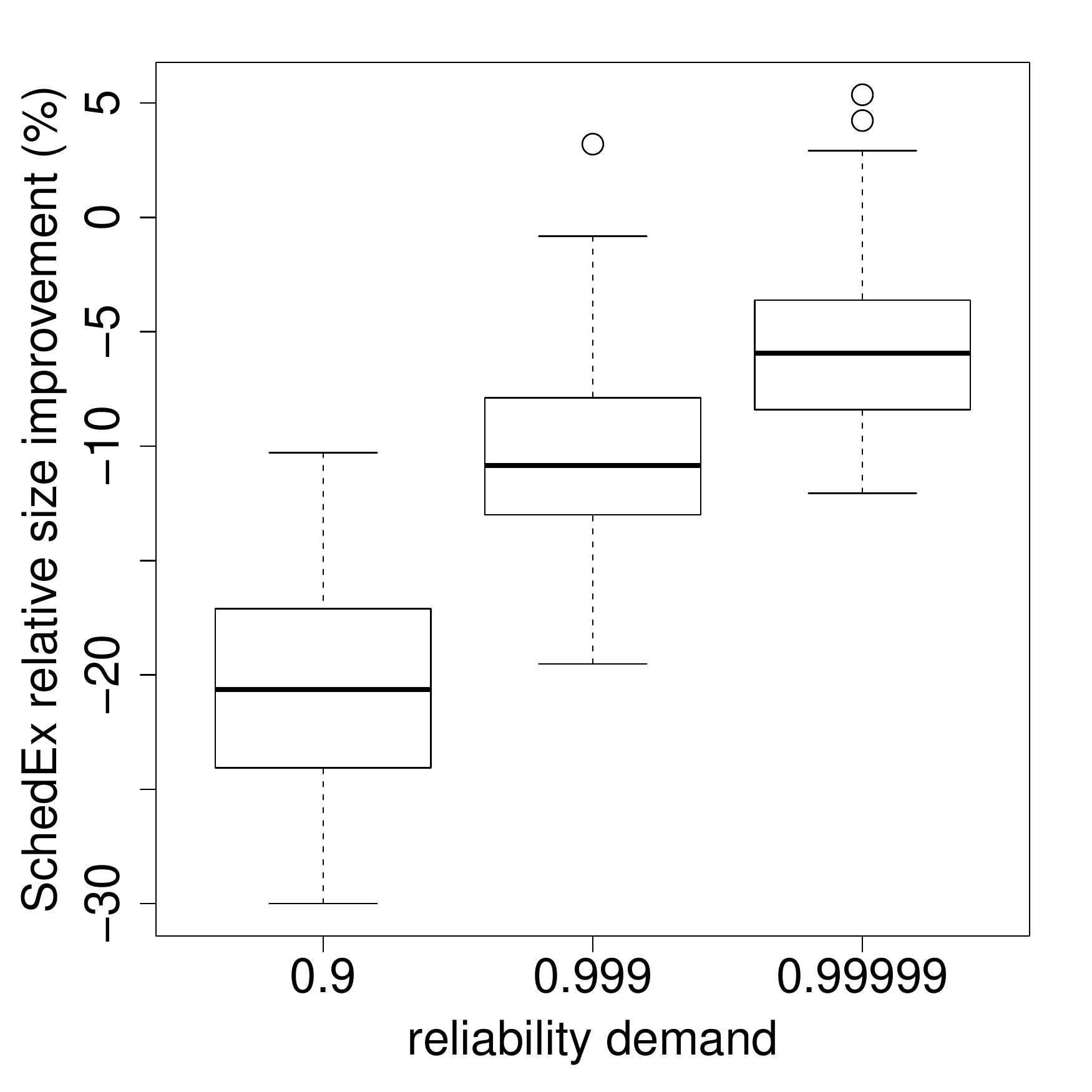}
  \caption{}
	\label{fig:schedexSize}
 \end{subfigure}%
   \caption{The percentual performance improvement of SchedEx over the
  Incrementer grouped by reliability constraint $\underline\rho$. (Left: Time,
  Right: Schedule Size)}
\end{figure}

{
\begin{figure}[htp]
\centering
\begin{overpic}[width=.8\columnwidth]{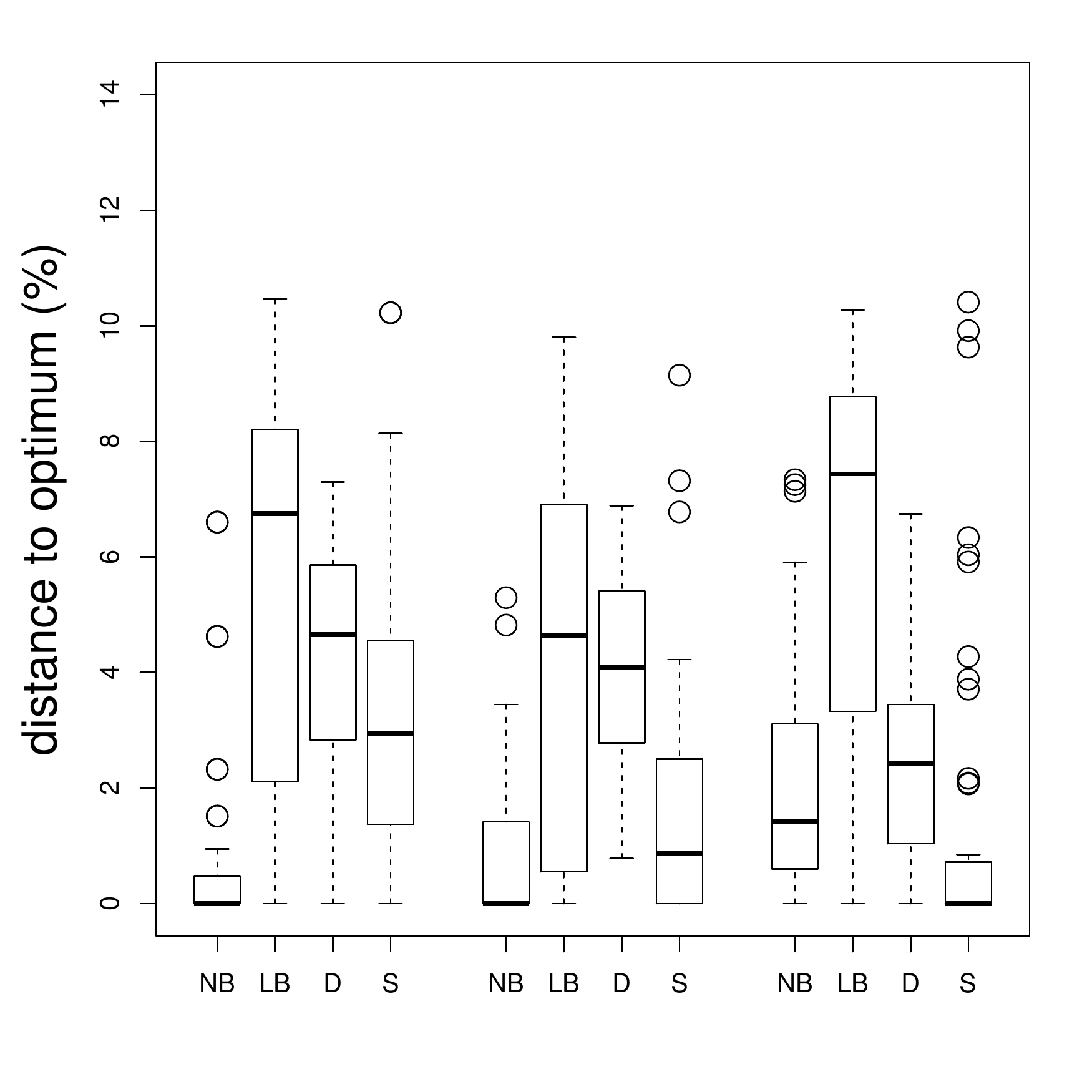}
 \put (22,3) {Basic}
 \put (45,3) {SchedEx}
 \put (70,3) {Incrementer}
\end{overpic}
  \captionof{figure}{The distance to optimum distribution for all four
  scheduling algorithms given three methods: Basic, SchedEx, and Incrementer.}
  \label{fig:relative}
\end{figure}}

\begin{figure}[htp]
\centering
  \includegraphics[width=.6\columnwidth]{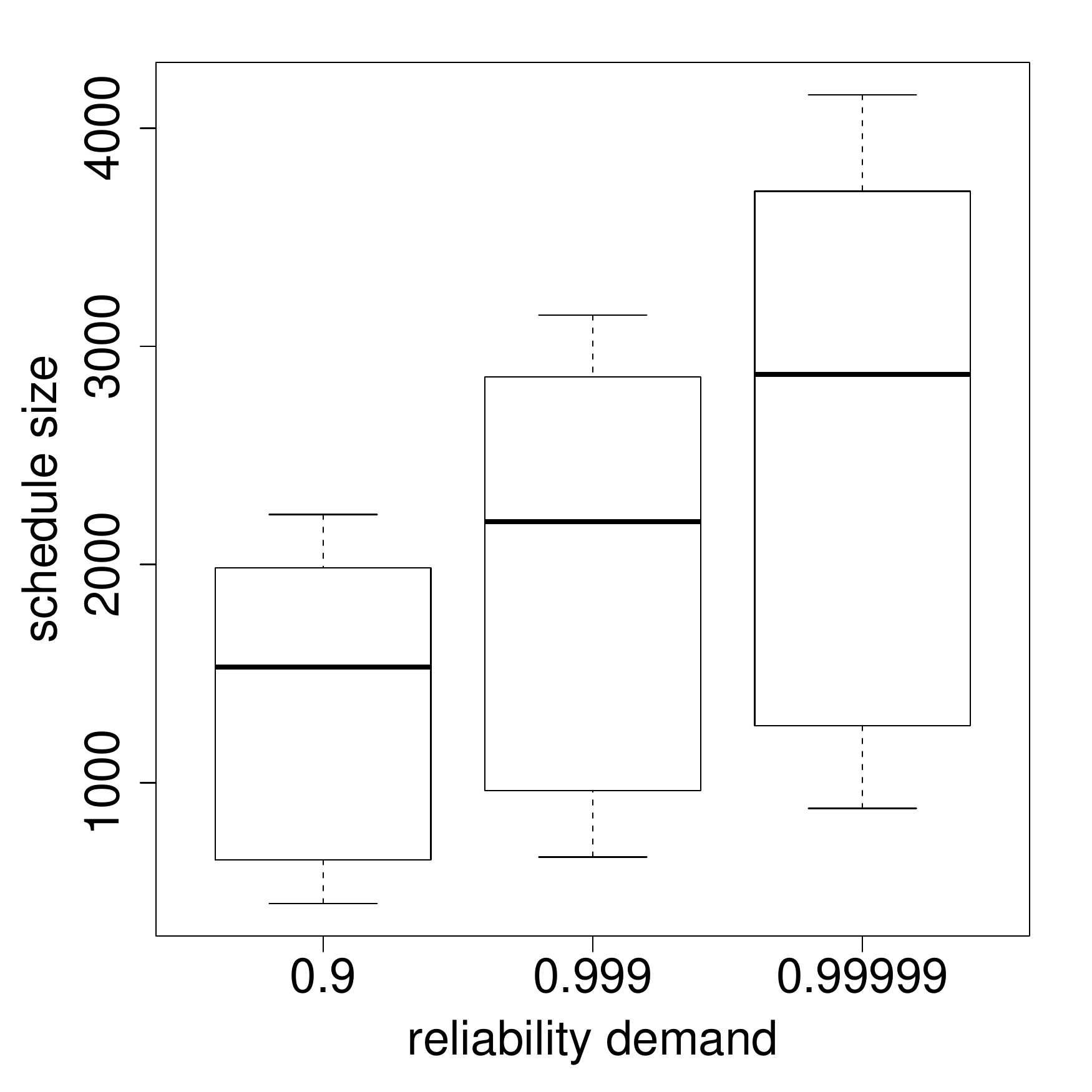}
  \caption{The difference in required schedule size that can be expected
  for varying $\underline\rho$ over all scheduling algorithms, using SchedEx.}
  \label{fig:relDemand}
\end{figure} 

Figure \ref{fig:relative} shows the relative distance to the best found solution
among the four scheduling algorithms, grouped by the basic scheduling without
reliability constraints (Basic), SchedEx, and the Incrementer. The Incrementer
favors Dedicated and Shared Scheduling, while SchedEx creates less
variance among the scheduling algorithms, suggesting that it is more generic and
less sensitive to the specifics of the scheduling algorithm of use. SchedEx
produces in average the best schedules with Node-based Scheduling, whereas the
Incrementer performs best with Dedicated Scheduling.
%

Figure \ref{fig:relDemand} illustrates the required schedule size difference for
varying $\underline\rho$ over all scheduling algorithms. The expected
schedule sizes for $\underline\rho$ being $0.9, 0.999$ and $0.99999$ are $1370$,
$1980$, and $2587$, respectively. The schedules can therefore be expected to be
ca. 45\% longer for a change in harshness from $0.9$ to $0.999$, and ca. 89\%
longer for a change from $0.9$ to $0.99999$ on the same topology.

With SchedEx, schedules with reliability guarantees can swiftly be
calculated. However, for many control applications would the latency
implications still not be acceptable. For instance, assuming WirelessHART with
slot sizes of 10 ms and $\underline\rho=0.99999$, the schedule frame would require about
1430 slots, translating into a delay of 14.3 seconds for one packet from
each transmitter to arrive at the destination on topologies of size 50. High
node-to-node packet reception rates between the sensors are one way to reduce that number,
for instance by installing relays, or using multiple channels.

\section{Conclusions}
\label{conclusion}
We introduced SchedEx, a generic scheduling algorithm extension which gives
reliability guarantees for topologies with guaranteed lower-bounded node-to-node
packet reception rates. SchedEx is an order of magnitude faster than the
theoretically well developed approach in \cite{yan2014hypergraph} and the
harsher the reliability constraint, the relatively better SchedEx performs.
Guaranteeing harsh reliability constraints implies a substantial
latency penalty, where the expected difference in schedule size from one nine
to five nines is nearly twice the length. 
As stated in the discussion, guaranteeing reliability bounds has a huge impact on the effected latency bounds.
For future work, multi-channel scheduling and multiple sink deployment should
be investigated considering the required latency improvements.
For instance, WirelessHART uses up to 15 channels
to concurrently transmit packets. Other important future scheduling extensions
are the consideration of individual latency bounds for different flows, priority
handling, and the verification of the simulation results in a testbed. The
introduced SchedEx algorithm presents one step towards scalable scheduling, a highly required feature for end-to-end QoS guarantees in WSN.

%

\bibliography{article}




\ifCLASSOPTIONcaptionsoff
  \newpage
\fi 


  
\bibliographystyle{IEEEtran}
\end{document}